\newtheorem{thm}{Theorem}
\newtheorem{lemma}{Lemma}
\newtheorem{rk}{Remark}
\numberwithin{equation}{section} \setcounter{tocdepth}{1}
\def\s{\sigma}
\def\s{\sigma}
\def\s{\sigma}
\def\a{\alpha}
\def\Z{\mathbb{Z}}
\begin{document}
\title[Tree-hierarchy of DNA and distribution of Holliday junctions]{
Tree-hierarchy of DNA and distribution of Holliday junctions
}

\author{U. A. Rozikov}

\address{U.\ A.\ Rozikov, \\ Institute of mathematics,
29, Do'rmon Yo'li str., 100125, Tashkent, Uzbekistan.}
\email {rozikovu@yandex.ru}

\begin{abstract} We define a DNA as a sequence of $\pm 1$'s and embed it on a path of Cayley tree.
Using group representation of the Cayley tree, we give a hierarchy of a countable
set of DNAs each of which 'lives' on the same Cayley tree. This hierarchy has property that
each vertex of the Cayley tree belongs only to one of DNA. Then we give a model (energy, Hamiltonian) of this set of
DNAs by an analogue of Ising model with three spin values (considered as DNA base pairs) on a set of admissible configurations.
To study thermodynamic properties of the model of DNAs we describe corresponding translation invariant Gibbs measures (TIGM) of the model on the Cayley tree of order two. We show that there is a critical temperature $T_{\rm c}$ such that (i) if temperature $T>T_{\rm c}$ then
 there exists unique TIGM; (ii) if $T=T_{\rm c}$ then there are two TIGMs; (iii) if $T<T_{\rm c}$ then
 there are three TIGMs. Each such measure describes a phase of the set of DNAs. We use these results to study
 distributions of Holliday junctions and branches of DNAs.  In case of very high and very low temperatures we give
 stationary distributions and typical configurations of the Holliday junctions.
\end{abstract}
\maketitle

{\bf Mathematics Subject Classifications (2010).} 92D20; 82B20; 60J10; 05C05.

{\bf{Key words.}} DNA, Holliday junction, temperature, Cayley tree,
Gibbs measure.

\section{Introduction}

It is known that (see \cite{book}) genetic information is carried in the linear sequence of nucleotides in DNA.
Each molecule of DNA is a double helix formed from two complementary strands of nucleotides
held together by hydrogen bonds between $G-C$ and $A-T$ base pairs. Duplication of the genetic
information occurs by the use of one DNA strand as a template for formation of a complementary strand.
The genetic information stored in an organism's DNA contains the instructions for all the proteins
the organism will ever synthesize.

In \cite{Be} the author introduced an elastic model of large-scale
duplex DNA structure. The paper \cite{Be1} develops
this approach in detail, deriving explicit expressions for the elastic
equilibrium shapes of stressed DNA and suggesting applications to questions
of supercoiling.

Studying of DNA's thermodynamics one wants to know how temperature affects
the nucleic acid structure of double-stranded DNA \cite{Man}. There are several models of such thermodynamics
of DNAs (see for example, \cite{Ca},  \cite{Pe}, \cite{Ta}).

In \cite{Ta} forty thermodynamic parameters were estimated for DNA
duplexes with a single bulge loop. To investigate the effect
of the type of bulged base and its flanking base pairs, the
nearest-neighbor model to DNA sequences with a single bulge loop is applied.

The Chapter 3 of the book \cite{Pe} deals with the statistics of DNA, a problem
that, recently has attracted the attention of physicists, who attempt to uncover the scaling features of
coding (exons) and noncoding (introns) regions of a gene. Although there is still some
controversy, the current view is that long-range correlations exist when the complete
gene sequence, including introns, is considered and that these correlations are absent
when only the exons are read. The book also contains a chapter on the thermodynamics of DNA where
several results on variants of the Peyrard-Bishop model for local denaturation or melting
of DNA, i.e., the dissociation equilibrium between duplex and single-stranded DNAs,
are derived (see \cite{Pe} and references therein).

In this paper to study thermodynamic properties of a model of DNAs 
we embed them on a Cayley tree.  Motivations of this study will be clear after reading the Section 2,
where I tried to give relations (in footnotes) between biological notations and mathematical ones.

The paper is organized as follows.
In Section 2 we give main definitions from biology (DNA, Holliday junction, branched DNA etc.) and mathematics (Cayley tree, group of its representation, ${\mathbb Z}$-path etc.) which are needed in other sections.
Also we give tree-hierarchy of the set of DNAs.
Then define our model of DNAs, to study its thermodynamics.
In Section 3, we give a system of functional equations, each solution of which
defines a consistent family of finite dimensional Gibbs distributions and guarantees existence of thermodynamic limit for such distributions. Section 4 is devoted to translation invariant Gibbs measures of the set of DNAs on the Cayley tree of order two.
We show that, depending on temperature, number of translation invariant Gibbs measures can be up to three. Note that non-uniqueness of Gibbs measure corresponds to
phase coexistence in the system of DNAs.  In the last section by properties of Markov chains (corresponding to Gibbs measures) we study Holliday junction and branches of DNAs. In case of very high and very low temperatures we give
 stationary distributions and typical configurations of the Holliday junctions.

\section{Definitions and the model}\

{\it About DNA.} Deoxyribonucleic acid (i.e. DNA)\footnote{see https://en.wikipedia.org/wiki/DNA. Some colored pictures of this paper are taken from internet.} is a molecule that carries most of the
genetic instructions used in the growth, development, functioning and reproduction of
all known living organisms and many viruses.  Most DNA molecules consist of two
biopolymer strands coiled around each other to form a double helix.
The two DNA strands are known as
polynucleotides since they are composed of simpler units called nucleotides.
The two strands of DNA in a double helix can be pulled apart like a zipper,
either by a mechanical force or high temperature\footnote{In this paper we
want to give a model of this thermodynamics.}.

 Each nucleotide is composed of a nitrogen-containing nucleobase --either
 cytosine (C), guanine (G), adenine (A), or thymine (T) --as well as a
 sugar called deoxyribose and a phosphate group. The nucleotides are joined to
 one another in a chain by covalent bonds between the sugar of one nucleotide and the phosphate
 of the next, resulting in an alternating sugar-phosphate backbone (see Fig. \ref{f1}).

 According to base pairing rules (A with T, and C with G)\footnote{In our model we denote ``A with T" by ``-1" and ``C with G" by ``+1".},
 hydrogen bonds bind the nitrogenous bases of the two separate polynucleotide strands to make double-stranded DNA. The total amount of related DNA base pairs on Earth is estimated at $5\cdot  10^{37}$, and weighs 50 billion tonnes\footnote{This facts allow us to consider a DNA as a very long sequence consisting of $-1$ and $+1$'s.}.
\begin{figure}
\includegraphics[width=7cm]{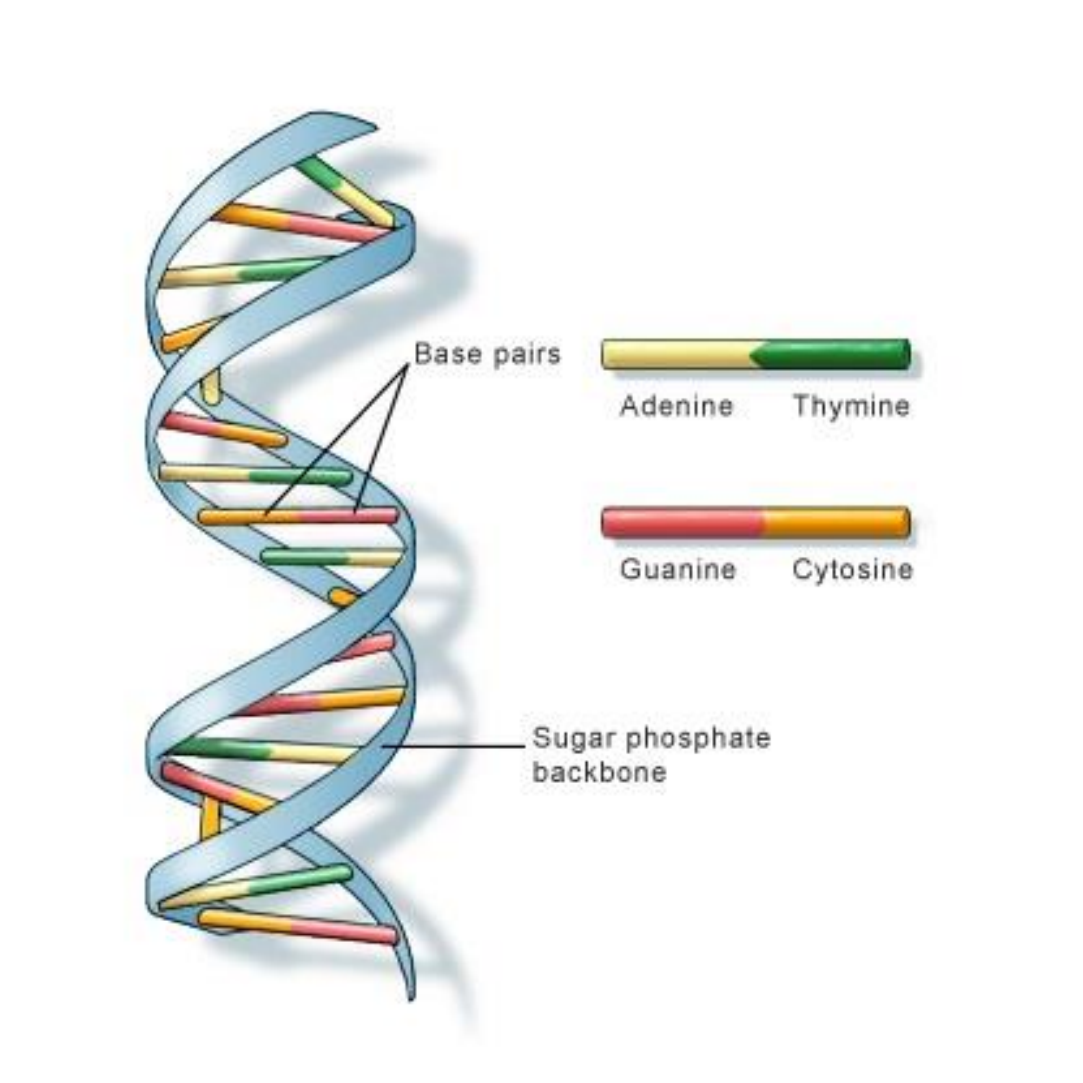},
\caption{\footnotesize \noindent
Structure of a DNA.}\label{f1}
\includegraphics[width=6cm]{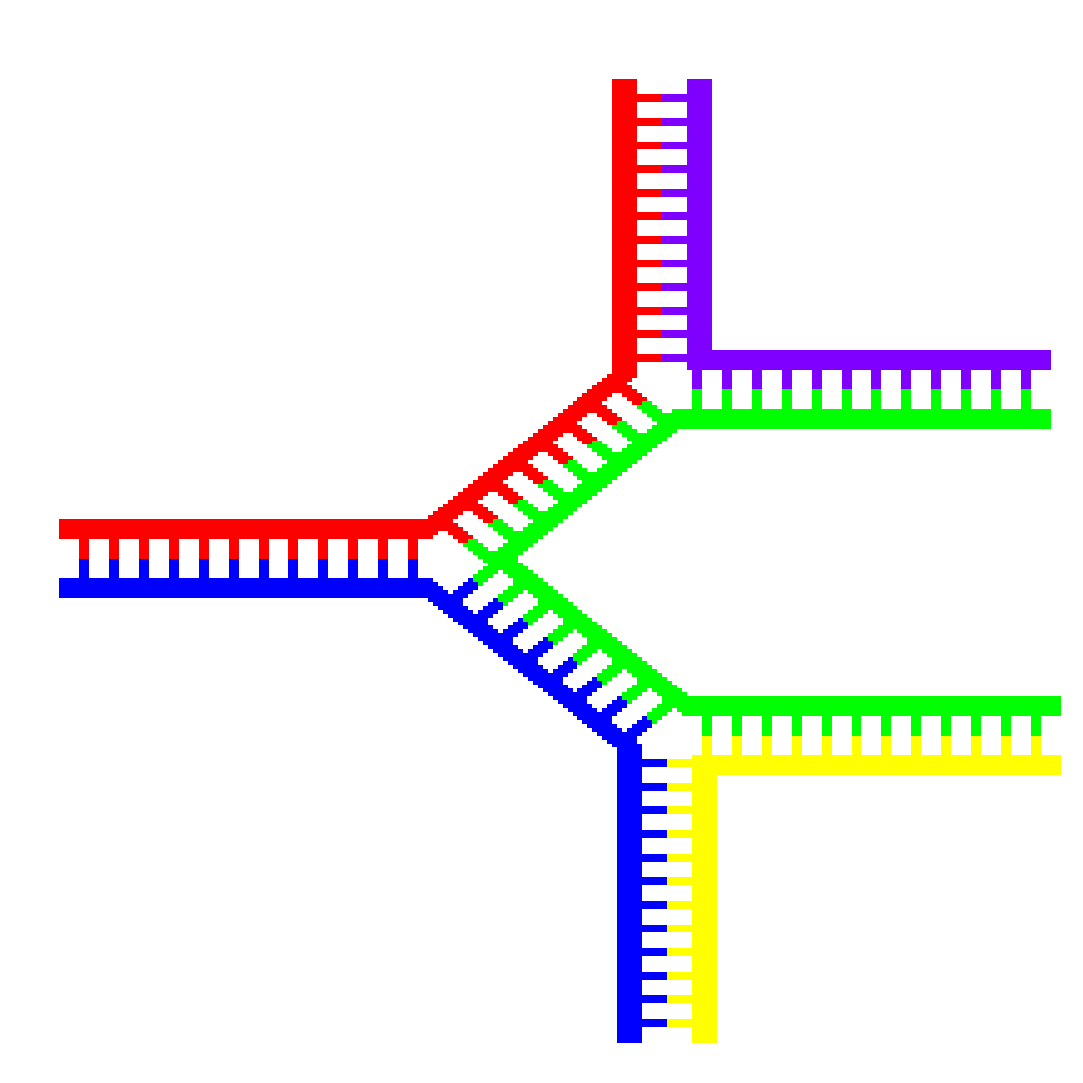}
\caption{\footnotesize \noindent
Multiple branched DNA.}\label{f2}
\includegraphics[width=7cm]{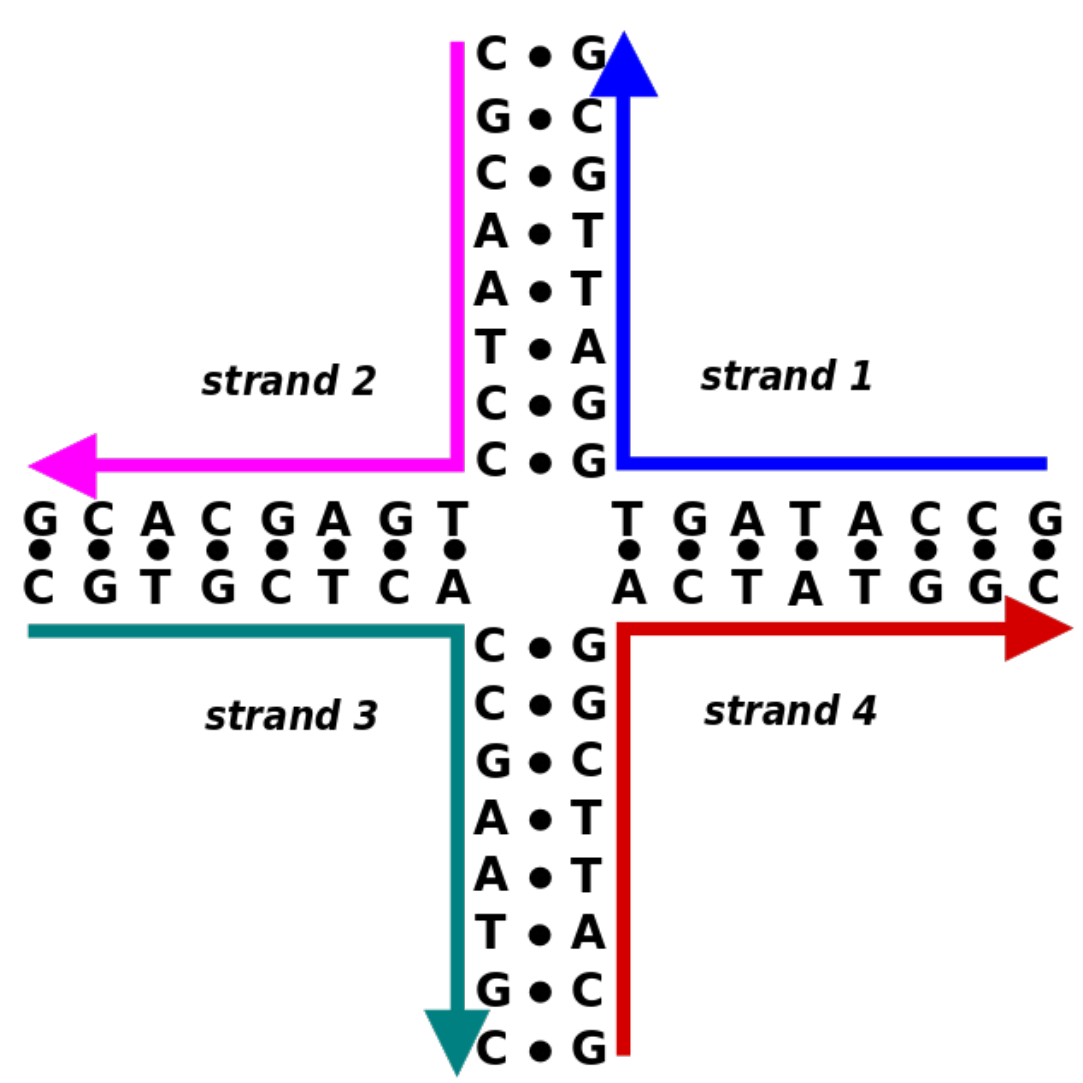},
\caption{\footnotesize \noindent
Schematic of a Holliday Junction showing the base sequence and secondary structure but not the tertiary structure.}\label{f3}
\end{figure}
Branched DNA can occur if a third strand of DNA is introduced and contains adjoining regions able to hybridize with the frayed regions of the pre-existing double-strand. Although the simplest example of branched DNA involves only three strands of DNA, complexes involving additional strands and multiple branches are also possible (see Fig. \ref{f2})  \footnote{Using methods of thermodynamics  (of statistical physics) we describe
a general structure of such branching process.}.

A Holliday junction is a branched nucleic acid structure that contains four double-stranded arms joined together. These arms may adopt one of several conformations depending on buffer salt concentrations and the sequence of nucleobases closest to the junction
(see Fig. \ref{f3}).\footnote{In our model we consider a set of DNAs which 'live' on a tree-like graph. Let $l$ be an edge of this graph we have a function $\sigma(l)$ with three possible values $-1,0,1$ (an analogue of spin values in physical systems), in case $\sigma(l)=0$ we say the edge $l$ does not belong to a DNA. If this $l$ separates two DNA then the value $\sigma(l)=1$ or $\sigma(l)=-1$ means that these two DNA have a Holliday junction.}   In \cite{Ho} the Holliday junction considered as an exquisitely sensitive force sensor whose force response is amplified with an increase in its arm lengths, demonstrating a lever-arm effect at the nanometer-length scale. Mechanical interrogation of the Holliday junction in three different directions helped elucidate the structures of the transient species populated during its conformational changes. In \cite{X} a model is presented for the mechanochemical-coupling mechanism in a Holliday junction.

{\it Cayley tree.} The Cayley tree $\Gamma^k$ of order $ k\geq 1 $ is an infinite tree,
i.e., a graph without cycles, such that exactly $k+1$ edges
originate from each vertex. Let $\Gamma^k=(V,L,i)$, where $V$ is the
set of vertices $\Gamma^k$, $L$ the set of edges and $i$ is the
incidence function setting each edge $l\in L$ into correspondence
with its endpoints $x, y \in V$. If $i (l) = \{ x, y \} $, then
the vertices $x$ and $y$ are called the {\it nearest neighbors},
denoted by $l = \langle x, y \rangle $. The distance $d(x,y), x, y
\in V$ on the Cayley tree is the number of edges of the shortest
path from $x$ to $y$:
$$
d (x, y) = \min\{d \,|\, \exists x=x_0, x_1,\dots, x_{d-1},
x_d=y\in V \ \ \mbox {such that} \ \ \langle x_0,
x_1\rangle,\dots, \langle x_{d-1}, x_d\rangle\} .$$

For a fixed $x^0\in V$ we set $ W_n = \ \{x\in V\ \ | \ \ d (x,
x^0) =n \}, $
\begin{equation}\label{p*}
 V_n = \ \{x\in V\ \ | \ \ d (x, x^0) \leq n \},\ \ L_n = \ \{l =
\langle x, y\rangle \in L \  | \ x, y \in V_n \}.
\end{equation}
For any $x\in V$ denote
$$
W_m(x)=\{y\in V: d(x,y)=m\}, \ \ m\geq 1.
$$
{\it Group representation of the tree.}
Let $G_k$ be a free product of $k + 1$ cyclic groups of the
second order with generators $a_1, a_2,\dots, a_{k+1}$,
respectively, i.e. $a_i^2=e$, where $e$ is the unit element.

It is known that there exists a one-to-one correspondence between the set of vertices $V$ of the
Cayley tree $\Gamma^k$ and the group $G_k$.

 This correspondence can be given as follows. Fix an arbitrary element $x_0\in V$ and correspond it to the unit element $e$ of the group $G_k$. Without loss of generality we assume that the Cayley tree is a planar graph. Using $a_1,\dots,a_{k+1}$ we numerate nearest-neighbors of element $e$, moving by positive direction (see Fig. \ref{fig2}). Now we shall give numeration of the nearest-neighbors of each $a_i$, $i=1,\dots, k+1$ by $a_ia_j$, $j=1,\dots,k+1$. Since all $a_i$ have the common neighbor $e$ we give to it $a_ia_i=a_i^2=e$. Other neighbors are numerated starting from $a_ia_i$ by the positive direction. We numerate the set of all nearest-neighbors of each $a_ia_j$ by words $a_ia_ja_q$, $q=1,\dots,k+1$, starting from $a_ia_ja_j=a_i$ by the positive direction. Iterating this argument one gets
a one-to-one correspondence between the set of vertices $V$ of the
Cayley tree $\Gamma^k$ and the group $G_k$ (see Chapter 1 of \cite{R} for properties of the group $G_k$).
\begin{figure}
  \includegraphics[width=11cm]{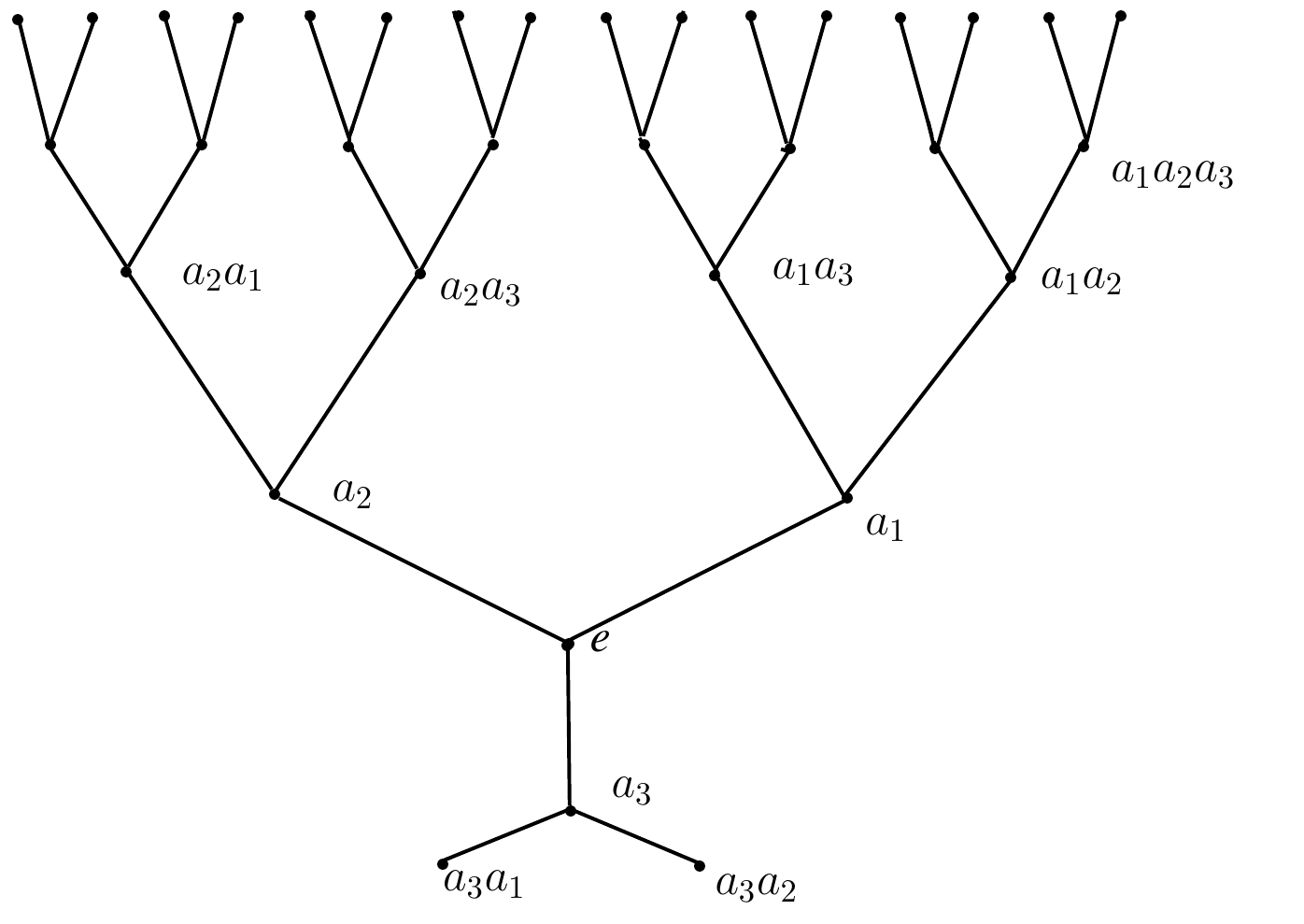}\\
  \caption{Some elements of group $G_2$ on Cayley tree of order two.}\label{fig2}
\end{figure}

We consider a normal subgroup $\mathcal H_0\subset G_k$ of infinite index constructed as follows.
  Let the mapping $\pi_0:\{a_1,...,a_{k+1}\}\longrightarrow \{e, a_1, a_2\}$ be defined by
    $$\pi_0(a_i)=\left\{%
\begin{array}{ll}
    a_i, & \hbox{if} \ \ i=1,2 \\
    e, & \hbox{if} \ \ i\ne 1,2. \\
\end{array}
\right.$$ Denote by $G_1$ the free product of cyclic groups
$\{e,a_1\}, \{e,a_2\}$. Consider
$$f_0(x)=f_0(a_{i_1}a_{i_2}...a_{i_m})=\pi_0(a_{i_1})
\pi_0(a_{i_2})\dots\pi_0(a_{i_m}).$$
    Then it is easy to see that $f_0$ is a homomorphism and hence
    $\mathcal H_0=\{x\in G_k: \ f_0(x)=e\}$ is a normal subgroup of
    infinite index.

Now we consider the factor group
$$G_k/\mathcal H_0=\{\mathcal H_0, \mathcal H_0(a_1), \mathcal H_0(a_2), \mathcal H_0(a_1a_2), \dots\},$$
where $\mathcal H_0(y)=\{x\in G_k: f_0(x)=y\}$. We introduce the notations
$$\mathcal H_n=\mathcal H_0(\underbrace{a_1a_2\dots}_n),$$
$$\mathcal H_{-n}=\mathcal H_0(\underbrace{a_2a_1\dots}_n).$$
In this notation, the factor group can be represented as
$$ G_k/\mathcal H_0=\{\dots, \mathcal H_{-2}, \mathcal H_{-1}, \mathcal H_0, \mathcal H_1, \mathcal H_2, \dots\}.$$
We introduce the following equivalence relation on the set $G_k$: $x\sim y$ if $xy^{-1}\in \mathcal H_0$.
Then $G_k$ can be partitioned to countably many classes $\mathcal H_i$ of equivalent elements.
The partition of the Cayley tree $\Gamma^2$ w.r.t. $\mathcal H_0$ is shown in
Fig. \ref{fig9} (the elements of the class $\mathcal H_i$, $i\in \Z$, are merely denoted by $i$).

{\it $\mathbb Z$-path.}
Denote
$$
q_i(x) = |W_1(x)\cap \mathcal H_i|, \ \ x\in G_k,
$$
where $|\cdot|$ is the counting measure of a set.
We note that (see \cite{RI}) if $x\in \mathcal H_m$, then
$$
q_{m-1}(x)=1,  \ \ q_m(x)=k-1, \ \ q_{m+1}(x)=1.
$$
From this fact it follows that
for any $x\in V$, if $x\in \mathcal H_m$ then there is a unique two-side-path (containing $x$) such that
the sequence of numbers of equivalence classes for vertices of this path
in one side are $m, m+1, m+2,\dots$ in the second side the sequence is $m, m-1,m-2,\dots$.
Thus the two-side-path has the sequence of numbers of equivalent classes as $\mathbb Z=\{...,-2,-1,0,1,2,...\}$.
Such a path is called $\mathbb Z$-path (In Fig. \ref{fig9} one can see the unique $\mathbb Z$-paths of each vertex of the tree.)
In Fig.\ref{tree} the $\mathbb Z$-paths are shown in red color.

Since each vertex $x$ has its own $\mathbb Z$-path one can see that
the Cayley tree considered with respect to normal subgroup $\mathcal H_0$ contains infinitely many (countable) set of
$\mathbb Z$-pathes.
\begin{figure}
   \includegraphics[width=12cm]{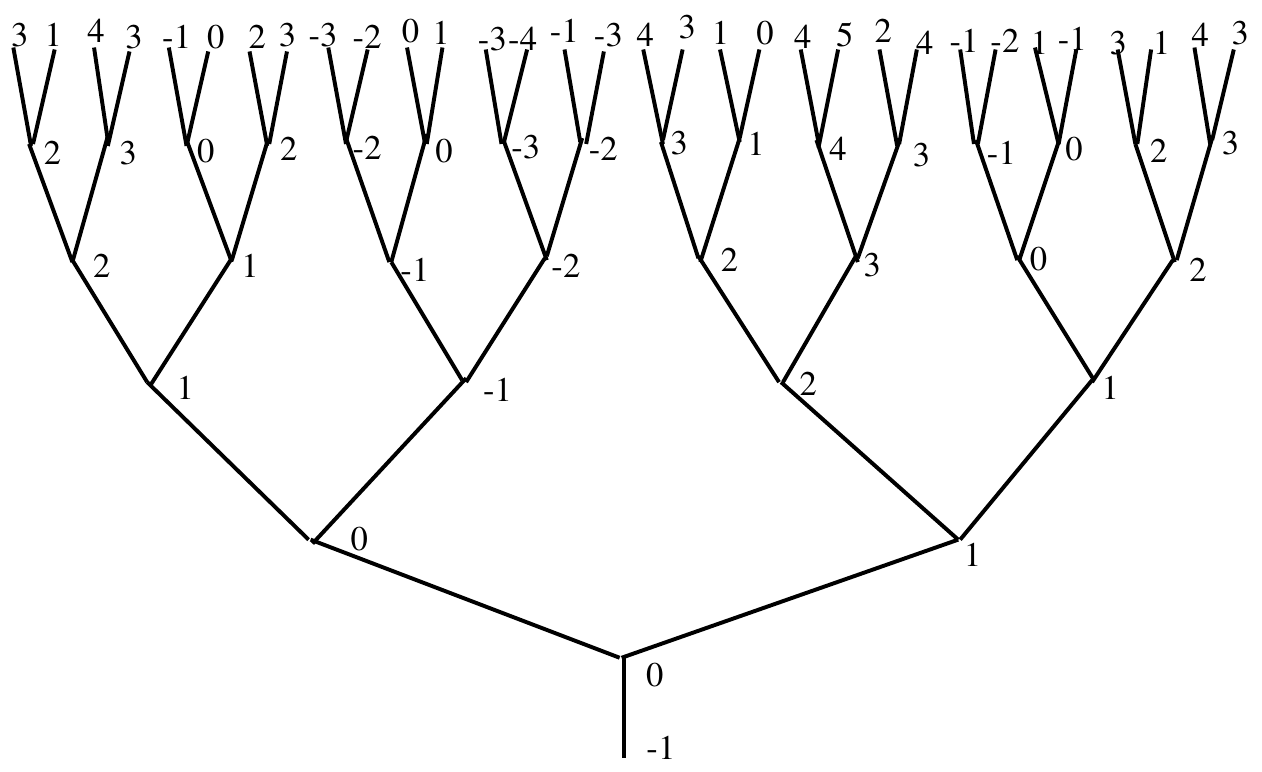}\\
  \caption{The partition of the Cayley tree $\Gamma^2$ w.r.t. $\mathcal H_0$, the elements of the class $\mathcal H_i$, $i\in \Z$, are denoted by $i$.}\label{fig9}
\end{figure}
\begin{figure}
   \includegraphics[width=12cm]{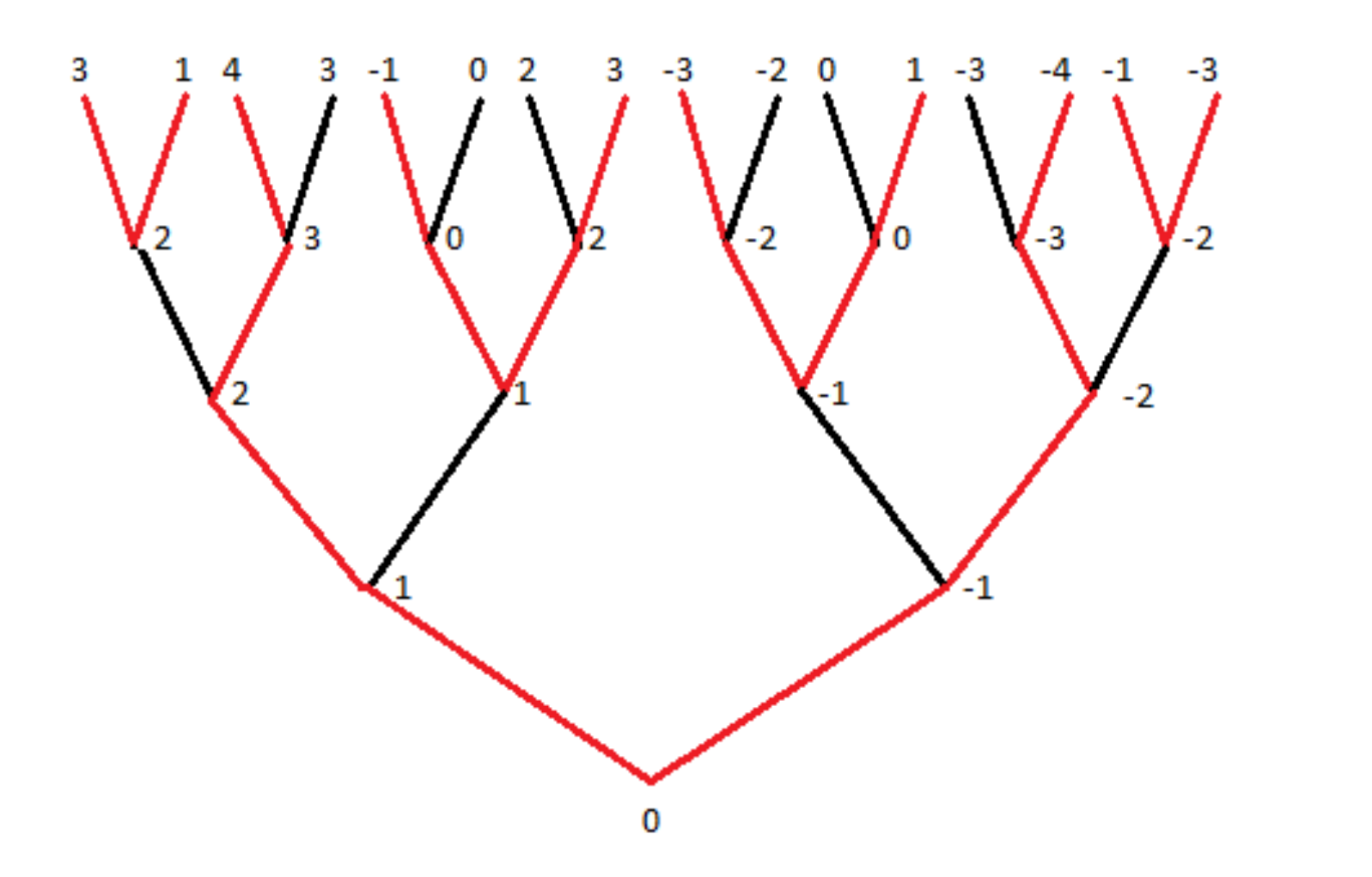}\\
  \caption{Half Cayley tree of order two. The red pathes are $\mathbb Z$-paths, each are two-side
  infinite pathes. The black edges separate $\mathbb Z$-paths. Note that each black edge
  have the same class of endpoints.}\label{tree}
\end{figure}
{\it The model.} Let $L$ be the set of edges of a Cayley tree.
Consider function $\sigma$ which assigns to each edge $l\in L$,
values $\sigma(l)\in \{-1,0,1\}$. Value $\sigma (l)=-1$ (resp. +1)
means that edge $l$ is `occupied' by $-1=$``A with T" (resp. $1=$``C with G"),
and $\sigma (l)=0$ that $l$ is `vacant'.

A configuration $\sigma=\{\sigma(l),\ l\in L\}$ on edges of the Cayley tree
is given by a function from $L$ to $\{-1,0,1\}$. The set of all
configurations in $L$ is denoted by $\Omega$. Configurations in
$L_n$  are defined analogously and the set of all
configurations in $L_n$ is denoted by $\Omega_n$.

A configuration $\sigma=\{\sigma(l),\ l\in L\}$ is called {\it admissible} if
$\sigma(l)\ne 0$ for any $l\in \mathbb Z$-path.\footnote{Note that this admissibility reminds the admissibility in case
of hard-core models (see Chapter 7 of \cite{R} and \cite{SR}), but
they are different, because in the hard-core models the admissibility is given by conditions for configurations on nearest neighboring vertices.}

The restriction of an admissible configuration on
a $\mathbb Z$-path is called {\it a DNA} (since it is a sequence of $-1$ and $1$ values, see above the part "About DNA").

We note that for an admissible configuration $\sigma$ we have $\sigma(l)=0$ iff $l=\langle x,y\rangle$ is with $x\sim y$, i.e.
$f_0(x)=f_0(y)$. Comparing this with Holliday junction one can see that DNAs corresponding to two $\mathbb Z$-paths can
have a junction only through the edge  $l=\langle x,y\rangle$ with $x\sim y$.

The most common discrete models treat DNA as a
collection of rigid subunits representing the base pairs (see Figure \ref{f1}) \cite{Sw}. This
description has long been used by chemists to characterize DNA crystal
structures. Following formulas (4.9) and (4.10) of \cite{Sw} we consider the following
simple model of the energy of the configuration $\sigma$ of a set of DNAs:
\begin{equation}\label{h}
H(\sigma)=J\sum_{\langle l,t\rangle\in L\times L}
\sigma(l)\sigma(t),
\end{equation}
where $J>0$ is a coupling constant, $\sigma(l)\in\{-1,0,1\}$ and
$\langle l,t\rangle$ stands for nearest neighbor edges,
i.e. edges which have a common endpoint.

{\it Tree-hierarchy of the set of DNAs.} We shall give a Cayley tree hierarchy of the set of DNAs, as a DNA crystal.
We note that the Cayley tree is an infinite sphere,
the center of which is everywhere (i.e. each its vertex can be considered as a center),
the circumference nowhere\footnote{There are other
kind of DNA crystals see \cite{Malo}, but the consideration of tree-hierarchy is motivated by the following
quote of Pascal: ``The whole visible world is only an imperceptible atom in the ample
bosom of nature. No idea approaches it. We may enlarge our conceptions beyond all
imaginable space; we only produce atoms in comparison with the reality of things.
It is an infinite sphere, the center of which is everywhere,
the circumference nowhere."  Moreover, ``natural hierarchical" structures in an ultrametric space such that any point
of sphere is his center. This is consistent with Pascal's description the structure of nature. Thus modern science concluding that
structure of the things more close to tree-structure.
The famous legend about under what circumstances Descartes has thought up his system of coordinates:
Having looked at a shady tree through the window protected by lattice rods
the philosopher once, speak, I have suddenly understood what by means of
squares of a lattice can be set numbers of provision of parts of an oak – a trunk,
branches, leaves. And reducing the size of cells of such grid,
it is possible to receive descriptions (or "digitizations" as now speak)
an oak with the increasing and large number of details. The rectangular Cartesian
system of coordinates that is well-known, became discovery of the greatest value
for the subsequent forming of mathematical fundamentals of physics. It is much less known that
if Descartes's thought has gone a little not so and if he, say, has tried to
describe a picture in a window by means of other, new system of numbers –
capable to directly describe a tree thanks to own treelike structure –
that all science could look in a root today differently (see \cite{Pp} and https://kniganews.org/2013/03/25/beyond-clouds-61/)}.

    Given an admissible configuration $\sigma$ on a Cayley
tree, since there are countably many $\mathbb Z$-pathes we have a countable many distinct DNAs.
We say that two DNA {\it are neighbors} if there is an edge (of the Cayley tree) such that
one its endpoint belongs to the first DNA and
another endpoint of the edge belongs to the second DNA. By our
construction it is clear (see Fig. \ref{fig9} and  Fig. \ref{tree}) that
such an edge is unique for each neighboring pair of DNAs. This edge
has equivalent endpoints, i.e. both endpoints belong to the same class
$\mathcal H_m$ for some $m\in \mathbb Z$.

Moreover these countably infinite set of DNAs have a hierarchy that
\begin{itemize}
\item each DNA has its own countably many set of neighboring DNAs.

\item for any two neighboring DNAs, say $D_1$ and $D_2$, there exists a unique
edge $l=l(D_1,D_2)=\langle x,y\rangle$ with $x\sim y$ (of the Cayley tree) which connects DNAs.

\item two DNAs, $D_1,D_2$ (corresponding to two $\mathbb Z$-paths) can
have a Holliday junction if and only if they are neighbors,
 i.e. through the unique edge $l(D_1,D_2)$.

\item  For any finite $n\geq 1$ the ball $V_n$ has intersection only with finitely many DNAs.

\end{itemize}

\section{Finite dimensional distributions and equations}

Let $\Omega_n^a$ (resp. $\Omega^a$) be the set of all
admissible configurations on $L_n$ (resp. $L$).

Denote
$$E_n=\{\langle x,y\rangle\in L: x\in W_{n-1}, \, y\in W_n\},$$
$$\Omega_n^{ba}= \ \ \mbox{the set of admissible configurations on} \ \ E_n.$$
For $l\in E_{n-1}$ denote
$$S(l)=\{t\in E_n: \langle l,t\rangle\}.$$
It is easy to see that
$$S(l)\cap \mathbb Z-{\rm path}=\left\{\begin{array}{lll}
\{l_0,l_1\}\subset L, \ \ \mbox{if} \ \  l\notin \mathbb Z-{\rm path}\\[2mm]
\{l_1\}\subset L, \ \ \ \ \ \ \mbox{if} \ \  l\in \mathbb Z-{\rm path}
\end{array}
\right..$$
We denote
$$S_0(l)=S(l)\setminus \{l_0,l_1\}, \ \ l\notin  \mathbb Z-{\rm path},$$
$$S_1(l)=S(l)\setminus \{l_1\}, \ \ l\in  \mathbb Z-{\rm path}.$$

Define a finite-dimensional distribution of a probability measure $\mu$ on $\Omega_n^a$ as
\begin{equation}\label{*}
\mu_n(\sigma_n)=Z_n^{-1}\exp\left\{\beta H_n(\sigma_n)+\sum_{l\in E_n}h_{\sigma(l),l}\right\},
\end{equation}
where $\beta=1/T$, $T>0$ is temperature,  $Z_n^{-1}$ is the normalizing factor,
$\{h_{i,l}\in \mathbb R, i=-1,0,1, \, l\in L\}$ is a collection of real numbers\footnote{We note that the quantities $\exp(h_{i,l})$ define a boundary
law in the sense of Definition 12.10 in Georgii's book \cite{Ge}. In our case these quantities mean a boundary law of our biological system of DNAs.} and
$$H_n(\sigma_n)=J\sum_{l,t\in L_n:\atop \langle l,t\rangle}
\sigma(l)\sigma(t).$$

We say that the probability distributions (\ref{*}) are compatible if for all
$n\geq 1$ and $\sigma_{n-1}\in \Omega^a_{n-1}$:
\begin{equation}\label{**}
\sum_{\omega_n\in \Omega_n^{ba}}\mu_n(\sigma_{n-1}\vee \omega_n)=\mu_{n-1}(\sigma_{n-1}).
\end{equation}
Here $\sigma_{n-1}\vee \omega_n$ is the concatenation of the configurations.
In this case there exists a unique measure $\mu$ on $\Omega^a$ such that,
for all $n$ and $\sigma_n\in \Omega^a_n$,
$$\mu(\{\sigma|_{L_n}=\sigma_n\})=\mu_n(\sigma_n).$$
Such a measure is called a {\it Gibbs measure} corresponding to the Hamiltonian
(\ref{h}) and functions $h_{i,l}, l\in L$.

The following statement describes conditions on $h_{i,l}$ guaranteeing compatibility of $\mu_n(\sigma_n)$.

\begin{thm}\label{ei} Probability distributions
$\mu_n(\sigma_n)$, $n=1,2,\ldots$, in
(\ref{*}) are compatible iff for any $l\in L$
the following equations hold

 $$z_{0,l}={1+z_{l_0}\over \alpha+\alpha^{-1}z_{l_0}}\cdot {1+z_{l_1}\over \alpha+\alpha^{-1}z_{l_1}}
 \prod_{t\in S_0(l)}{1+z_{0,t}+z_{1,t}\over \alpha+z_{0,t}+\alpha^{-1}z_{1,t}}, \ \   l\notin \mathbb Z-{\rm path},$$
  \begin{equation}\label{***}
  z_{1,l}={\alpha^{-1}+\alpha z_{l_0}\over \alpha+\alpha^{-1}z_{l_0}}\cdot {\alpha^{-1}+\alpha z_{l_1}\over \alpha+\alpha^{-1}z_{l_1}}
 \prod_{t\in S_0(l)}{\alpha^{-1}+z_{0,t}+\alpha z_{1,t}\over \alpha+z_{0,t}+\alpha^{-1}z_{1,t}}, \ \
   l\notin \mathbb Z-{\rm path},
  \end{equation}
  $$
   z_{l}={\alpha^{-1}+\alpha z_{l_1}\over \alpha+\alpha^{-1}z_{l_1}}
 \prod_{t\in S_1(l)}{\alpha^{-1}+z_{0,t}+\alpha z_{1,t}\over \alpha+z_{0,t}+\alpha^{-1}z_{1,t}}, \ \ l\in \mathbb Z-{\rm path}.
$$

Here,
\begin{equation}\label{alp}
\alpha=\exp(J\beta); \, z_{i,l}=\exp\left(h_{i,l}-h_{-1,l}\right), i=0,1; \, z_{l}=\exp\left(h_{1,l}-h_{-1,l}\right).
\end{equation}
\end{thm}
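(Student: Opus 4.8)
The plan is to derive the stated system directly from the compatibility condition (\ref{**}) by the standard cavity computation on trees, paying attention to the edge-valued nature of the model and to the distinction between edges lying on a $\mathbb Z$-path and those that do not. First I would split the Hamiltonian as $H_n(\sigma_{n-1}\vee\omega_n)=H_{n-1}(\sigma_{n-1})+\Delta_n(\sigma_{n-1},\omega_n)$, where $\Delta_n$ collects exactly those nearest-neighbour pairs $\langle l,t\rangle$ in which at least one edge lies in $E_n$. Substituting the definition (\ref{*}) into the left-hand side of (\ref{**}) and cancelling the common factor $\exp\{\beta H_{n-1}(\sigma_{n-1})\}$, the compatibility requirement becomes an identity of the form
\[
\frac{Z_{n-1}}{Z_n}\prod_{y\in W_{n-1}}F_y\big(\sigma(l_y)\big)=\prod_{l\in E_{n-1}}\exp\big(h_{\sigma(l),l}\big),
\]
where $l_y\in E_{n-1}$ is the unique edge joining $y$ to its predecessor and $F_y(i)$ is the sum over the configurations of the edges emanating downward from $y$, with the value on $l_y$ frozen to $i$.

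The key step is to evaluate $F_y(i)$. Since the downward edges at $y$ meet $L_{n-1}$ only through $l_y$, conditioning on $\sigma(l_y)=i$ separates the contribution into a product over the child edges of $y$, and each child edge $t$ then contributes a factor
\[
\sum_{s}\exp\big(h_{s,t}\big)\exp\big(\beta J\,i\,s\big),
\]
the value $s$ ranging over $\{-1,1\}$ when $t$ lies on a $\mathbb Z$-path and over $\{-1,0,1\}$ otherwise; this is precisely where admissibility forbids $s=0$ on $\mathbb Z$-path edges. Here I would invoke the combinatorial description recorded earlier via $q_i(x)$: if $l_y\notin\mathbb Z$-path the children split into the two on-path edges $l_0,l_1$ together with the off-path set $S_0(l_y)$, whereas if $l_y\in\mathbb Z$-path there is a single on-path child $l_1$ and the off-path set $S_1(l_y)$. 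Writing $\alpha=\exp(\beta J)$ and factoring $\exp(h_{-1,t})$ out of each child factor turns the resulting numerators and denominators into exactly the expressions $1+z_t$, $\alpha+\alpha^{-1}z_t$, $1+z_{0,t}+z_{1,t}$, $\alpha+z_{0,t}+\alpha^{-1}z_{1,t}$, and so on, appearing in (\ref{***}).

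Finally, the displayed product identity must hold for every admissible $\sigma_{n-1}$. Fixing all edge values except the one on a single $l=l_y\in E_{n-1}$ and forming the ratio of the identity for two values of $\sigma(l)$ cancels the normalization $Z_{n-1}/Z_n$, the factors $\exp(h_{-1,t})$ pulled from the children, and the contributions of all other edges, leaving one relation per edge which, by (\ref{alp}), reads $z_{i,l}=F_y(i)/F_y(-1)$ for $i=0,1$ (or $i=1$ only, when $l\in\mathbb Z$-path). Expanding the right-hand side by the child factors computed above yields exactly the three displayed equations. For the converse I would read the computation backwards: assuming (\ref{***}), reconstruct the edgewise identities, multiply them over $y\in W_{n-1}$, and choose the ratio $Z_{n-1}/Z_n$ to match normalizations, thereby recovering (\ref{**}). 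The main obstacle I anticipate is bookkeeping rather than analysis, namely correctly separating the on-path children $l_0,l_1$ (or $l_1$) from the off-path sets $S_0(l)$ (or $S_1(l)$) at each vertex, and threading the admissibility constraint through the local sums so that the two cases of (\ref{***}) emerge with the correct products.
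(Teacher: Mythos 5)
Your derivation is, in substance, the argument the paper itself only sketches: the paper's proof consists of fixing a root edge $l_0$, noting the resulting bijection $l\mapsto y$ between edges and vertices, and citing the proof of Theorem 2.1 of \cite{R}, which is exactly the cavity/telescoping computation you carry out explicitly, including the correct case split into the on-path children $l_0,l_1$ (resp.\ $l_1$) versus $S_0(l)$ (resp.\ $S_1(l)$) and the suppression of the state $0$ on $\mathbb Z$-path edges. One caution: the increment $\Delta_n$ as you describe it would also contain pairs of sibling edges of $E_n$ sharing a vertex of $W_{n-1}$ (such edges do have a common endpoint, hence are ``nearest neighbours'' under the paper's literal definition), and those terms would destroy the claimed factorization of $F_y(i)$ over the children; the equations (\ref{***}) are consistent only with interactions along parent--child edge pairs, so that restriction --- left implicit in the paper as well --- must be made explicit before the product form of $F_y$ is legitimate.
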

\begin{proof} Fix an edge $l_0\in L$ as a 'root' in $L$. Then there is one-to-one correspondence between elements of $L$ and elements of $V$,
which can be given as follows: let $l=\langle x,y\rangle$ be an arbitrary edge, without loss of generality we assume that $x$ is closer (than $y$)
to $l_0$. Then it is easy to see that the map $l\to y$ is one-to-one. Therefore the proof can be completed similarly as the proof of
Theorem 2.1 in \cite{R}.
\end{proof}

From Theorem \ref{ei} it follows that for any set of vectors
${\bf z}=\{(z_{0,l}, z_{1,l}, z_t), l\notin {\mathbb Z}- {\rm path},  \ \ t\in {\mathbb Z}\}$
satisfying the system of functional equations (\ref{***}) there exists a unique Gibbs measure $\mu$ and vice versa. However,
the analysis of solutions to (\ref{***}) is not easy.

In next section we shall give several solutions to (\ref{***}).

\begin{rk} Note that if there is more than one solution to equation (\ref{***}),
then there is more than one Gibbs measure
corresponding to these solutions. One says that a ``phase'' transition occurs
for the model of DNAs, if equation (\ref{***}) has more than one solution.

The number of the solutions of equation (\ref{***}) depends on  the
temperature-parameter $\alpha=\exp(\frac{J}{T})$. In this paper without loss of generality we take $J=1$.
\end{rk}

\section{Translation invariant Gibbs measures of the set of DNAs}

  In this section, in case $J=1$, we find solutions ${\bf z}_l$ to the system of functional equations (\ref{***}), which does not depend on $l$, i.e.,
  \begin{equation}\label{zzz}
z_{0,l}=u, \ \ z_{1,l}=v, \ \ \forall l\notin {\mathbb Z}-{\rm path}; \ \ z_{l}=w,\ \ \forall l\in {\mathbb Z}-{\rm path}.
\end{equation}
  The Gibbs measure corresponding to such a solution is called {\it translation invariant}.

For $u, v, w$ from (\ref{***}) we get
$$u=\left({1+u+v\over \alpha+u+\alpha^{-1}v}\right)^{k-2}\left({1+w\over \alpha+\alpha^{-1}w}\right)^2,$$
\begin{equation}\label{f}
v=\left({1+\alpha u+\alpha^2v\over \alpha^2+\alpha u+v}\right)^{k-2}\left({1+\alpha^2w\over \alpha^2+w}\right)^2,
\end{equation}
$$w=\left({1+\alpha u+\alpha^2v\over \alpha^2+\alpha u+v}\right)^{k-1}\left({1+\alpha^2w\over \alpha^2+w}\right).$$
Here $u,v,w>0$.
By assumption $J=1$ we have $\alpha>1$.

It is clear that $v=w=1$ satisfies the system (\ref{f}) for any $k\geq 2$ and $\alpha>1$,
then from the first equation of the system we obtain
\begin{equation}\label{vw1}
u=\left({2+u\over \alpha+\alpha^{-1}+u}\right)^{k-2}\left({2\over \alpha+\alpha^{-1}}\right)^2.
\end{equation}

For simplicity we consider the case $k=2$. Then from (\ref{vw1}) we get
$$u=\left({2\over \alpha+\alpha^{-1}}\right)^2=\cosh^{-2}(\beta).$$
Thus for $k=2$ independently on $\alpha$ (i.e. $\beta=1/T$) the vector
$(\cosh^{-2}(\beta),1,1)$ is a solution to system (\ref{f}).
To find other solutions (for $k=2$) we substitute $u$ and $v$ to the third equation
of (\ref{f}) and obtain
$$w={1+\a^2w\over \a^2+w}\left({(\a^2+w)^2 +\a^3(1+w)^2+\a^2(1+\a^2w)^2\over \a^2(\a^2+w)^2 +\a^3(1+w)^2+(1+\a^2w)^2}\right).$$
Solving this equation we get three solutions: $w_1=1$ and
$$w_2=w_2(\a)={\a^6-\a^4-3\a^2-2\a-1-(\a^2-1)\sqrt{(\a^2+1)(\a^3+\a^2+\a+3)(\a^3-\a^2-\a-1)}\over 2(\a^2+\a+1)},$$
$$w_3=w_3(\a)={\a^6-\a^4-3\a^2-2\a-1+(\a^2-1)\sqrt{(\a^2+1)(\a^3+\a^2+\a+3)(\a^3-\a^2-\a-1)}\over 2(\a^2+\a+1)}.$$
Note that solutions $w_2$ and $w_3$ exist iff $\a^3-\a^2-\a-1\geq 0$. It is easy to check that the last inequality is true for any $\a\geq \a_*$ where
$$\a_*={1\over 3}\left(1+(19+3\sqrt{33})^{1/3}+{4\over (19+3\sqrt{33})^{1/3}}\right)\approx 1.839287.$$
Note that both $w_2$ and $w_3$ are positive and $w_2w_3=1$, see Fig.\ref{w23}
for graphs of the solutions as a function of $\alpha\in [\alpha_*,\infty)$.
\begin{figure}
\includegraphics[width=8cm]{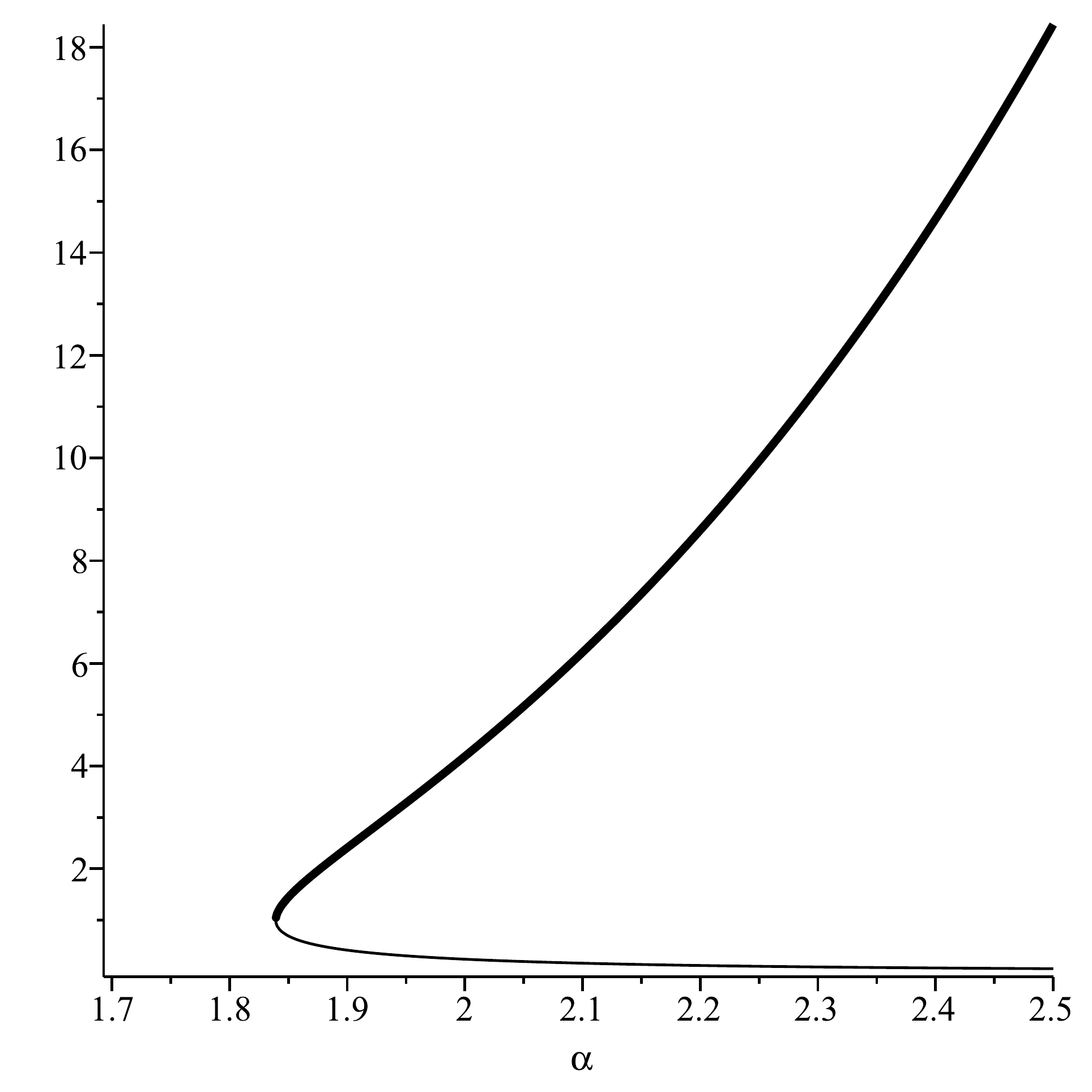}
\caption{The graph of the functions $w_2(\alpha)$ (thin curve) and $w_3(\alpha)$ (bold curve), for $\alpha\in [\alpha_*,\infty)$.
The curves meet at $\alpha_*$ and $w_2(\alpha_*)=w_3(\alpha_*)=1$.}\label{w23}
\end{figure}

Thus, for $k=2$ we proved the following
\begin{lemma}\label{l1} The following hold
\begin{itemize}
\item If $\alpha=\exp(\beta)<\alpha_*$ then system (\ref{f}) has unique solution
$${\bf z_1}=(u_1,v_1,w_1)=(\cosh^{-2}(\beta),1,1);$$
\item If $\alpha=\alpha_*$ then system (\ref{f}) has two solutions
$${\bf z_1}=(\cosh^{-2}(\beta),1,1), \ \  {\bf z_2}=(u_2,v_2,w_2);$$
\item If $\alpha>\alpha_*$ then system (\ref{f}) has three solutions
$${\bf z_1}=(\cosh^{-2}(\beta),1,1), \ \  {\bf z_2}=(u_2,v_2,w_2), \ \ {\bf z_3}=(u_3,v_3,w_3),$$
where
$$u_i=\left({1+w_i\over \alpha+\alpha^{-1}w_i}\right)^2, \ \ v_i=\left({1+\alpha^2w_i\over \alpha^2+w_i}\right)^2, \ \ i=2,3.$$
\end{itemize}
\end{lemma}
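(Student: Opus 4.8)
The plan is to exploit the simplification that occurs at $k=2$. When $k-2=0$ the product factors in the first two equations of (\ref{f}) disappear, so $u$ and $v$ are no longer free but become explicit functions of the single variable $w$:
$$u=\left(\frac{1+w}{\alpha+\alpha^{-1}w}\right)^2,\qquad v=\left(\frac{1+\alpha^2w}{\alpha^2+w}\right)^2.$$
Thus the whole system collapses to the third equation with $u,v$ eliminated, which is exactly the scalar equation displayed just before the lemma,
$$w=\frac{1+\alpha^2w}{\alpha^2+w}\cdot\frac{N(w)}{D(w)},$$
where $N(w)=(\alpha^2+w)^2+\alpha^3(1+w)^2+\alpha^2(1+\alpha^2w)^2$ and $D(w)=\alpha^2(\alpha^2+w)^2+\alpha^3(1+w)^2+(1+\alpha^2w)^2$. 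First I would check that $w=1$ solves this for every $\alpha$: a direct evaluation gives $N(1)=D(1)=(1+\alpha^2)^3+4\alpha^3$, so the last fraction equals $1$ and the equation reduces to the identity $w=1$. This yields the branch $\mathbf{z}_1=(\cosh^{-2}\beta,1,1)$, present at all temperatures.

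To locate the remaining solutions I would clear denominators and study the polynomial
$$P(w):=w(\alpha^2+w)D(w)-(1+\alpha^2w)N(w).$$
The key structural remark is that $N$ and $D$ are coefficient-reverses of one another: writing $N(w)=Aw^2+Cw+B$ one has $D(w)=Bw^2+Cw+A$, with $A=1+\alpha^3+\alpha^6$, $B=\alpha^2+\alpha^3+\alpha^4$ and $C=2B$. Because of this (anti)palindromic symmetry the $w^2$-coefficient of $P$ cancels and its $w^3$- and $w^1$-coefficients are opposite, so $P$ factors cleanly as
$$P(w)=(w^2-1)\bigl(Bw^2+Ew+B\bigr),\qquad E=\alpha^2(B-A)+C.$$
Discarding the inadmissible root $w=-1$, the positive solutions are $w_1=1$ together with the roots of $Bw^2+Ew+B=0$. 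Dividing this quadratic by $\alpha^2$ and simplifying $E=-\alpha^2(\alpha^6-\alpha^4-3\alpha^2-2\alpha-1)$ turns it into
$$(\alpha^2+\alpha+1)\,w^2-(\alpha^6-\alpha^4-3\alpha^2-2\alpha-1)\,w+(\alpha^2+\alpha+1)=0,$$
whose roots $w_2,w_3$ satisfy $w_2w_3=1$ and reproduce the closed forms stated in the text.

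The decisive and most laborious step is the sign analysis of the discriminant $\Delta=(\alpha^6-\alpha^4-3\alpha^2-2\alpha-1)^2-4(\alpha^2+\alpha+1)^2$ of that quadratic. Writing it as a difference of two squares and factoring each factor, the target is the identity
$$\Delta=(\alpha^2-1)^2(\alpha^2+1)(\alpha^3+\alpha^2+\alpha+3)(\alpha^3-\alpha^2-\alpha-1).$$
For $\alpha>1$ the first three factors are strictly positive, so $\operatorname{sign}\Delta=\operatorname{sign}(\alpha^3-\alpha^2-\alpha-1)$; as $\alpha_*$ is the unique real root of the cubic $\alpha^3-\alpha^2-\alpha-1$, this pins the threshold. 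Hence the quadratic has no real root for $\alpha<\alpha_*$ (only $\mathbf{z}_1$ survives), a double root for $\alpha=\alpha_*$ (where $w_2=w_3=1$, as in Fig.\,\ref{w23}, giving the second branch $\mathbf{z}_2$), and two distinct real roots for $\alpha>\alpha_*$ (giving $\mathbf{z}_2,\mathbf{z}_3$). In the latter two cases positivity of $w_2,w_3$ is immediate from $w_2w_3=1$ together with the positive sum $(\alpha^6-\alpha^4-3\alpha^2-2\alpha-1)/(\alpha^2+\alpha+1)>0$ for $\alpha\ge\alpha_*$. Finally I would substitute each admissible $w_i$ into the two formulas for $u_i,v_i$ to assemble the vectors $\mathbf{z}_i$. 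I expect the genuine effort to lie in the two algebraic computations—the cancellation producing the factor $(w^2-1)$ and, above all, the factorization of the degree-six discriminant into the stated product—rather than in any conceptual difficulty.
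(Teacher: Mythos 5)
Your proposal is correct and follows essentially the same route as the paper: for $k=2$ the first two equations of (\ref{f}) reduce $u,v$ to explicit functions of $w$, substitution into the third equation yields the scalar equation displayed before the lemma, and its positive roots are $w=1$ together with the two roots $w_2,w_3$ of the quadratic, which exist precisely when $\alpha^3-\alpha^2-\alpha-1\ge 0$, i.e. $\alpha\ge\alpha_*$. Your factorizations of $P(w)$ and of the discriminant are verifiably correct and simply supply the algebra that the paper states without detail.
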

Denote by $\mu_i$ the Gibbs measure which, by Theorem \ref{ei}, corresponds to solution ${\bf z_i}$, $i=1,2,3.$

Define critical temperature
$$T_{\rm c}={1\over \ln\alpha_*}.$$

Summarizing above results we obtain the following

\begin{thm}\label{tii} For the model (\ref{h}) of DNAs on the Cayley tree of order $k=2$ the following statements are true
 \begin{itemize}
    \item[(1)] If the temperature $T> T_{\rm c}$ then there is unique translation-invariant Gibbs measure $\mu_1$.

    \item[(2)] If $T= T_{\rm c}$ then there are 2 translation-invariant Gibbs measures $\mu_1$, $\mu_2$.

    \item[(2)] If $T< T_{\rm c}$ then there are 3 translation-invariant Gibbs measures $\mu_1, \mu_2, \mu_3$.
        \end{itemize}
 \end{thm}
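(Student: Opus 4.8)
The plan is to read off Theorem~\ref{tii} as a corollary of Lemma~\ref{l1} by reparametrizing the temperature. First I would invoke the bijection supplied by Theorem~\ref{ei}: every Gibbs measure of the model corresponds to exactly one solution of the system (\ref{***}), and conversely. Restricting this bijection to the translation-invariant ansatz (\ref{zzz}) shows that translation-invariant Gibbs measures are in one-to-one correspondence with positive solutions $(u,v,w)$ of the reduced system (\ref{f}). Hence, on the Cayley tree of order $k=2$, counting translation-invariant Gibbs measures is exactly counting solutions of (\ref{f}), and this count has already been determined by Lemma~\ref{l1} in terms of the position of $\alpha$ relative to $\alpha_*$.

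The second step is the reparametrization $\alpha \leftrightarrow T$. Since $J=1$ we have $\alpha=\exp(\beta)=\exp(1/T)$, so that $\ln\alpha = 1/T$ and the map $T\mapsto\exp(1/T)$ is a strictly decreasing bijection from $(0,\infty)$ onto $(1,\infty)$. By the definition $T_{\rm c}=1/\ln\alpha_*$ we have $\alpha_*=\exp(1/T_{\rm c})$, and strict monotonicity gives the equivalences
$$
T>T_{\rm c}\ \eqv\ \alpha<\alpha_*,\qquad
T=T_{\rm c}\ \eqv\ \alpha=\alpha_*,\qquad
T<T_{\rm c}\ \eqv\ \alpha>\alpha_*,
$$
where the inequality reverses because $1/T$ is decreasing in $T$. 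Feeding these three equivalences into the three cases of Lemma~\ref{l1} yields, respectively, a unique solution, two solutions, and three solutions of (\ref{f}); via the correspondence of the first step these become one, two, and three translation-invariant Gibbs measures, namely $\mu_1$; then $\mu_1,\mu_2$; and finally $\mu_1,\mu_2,\mu_3$.

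As for the main obstacle, there is essentially none left at this stage, since all of the substantive work has been absorbed into Lemma~\ref{l1} --- in particular the reduction of (\ref{f}) to a single cubic-type equation in $w$, the extraction of the roots $w_2,w_3$, and the verification that they are real and positive exactly when $\alpha^3-\alpha^2-\alpha-1\geq 0$, i.e. $\alpha\geq\alpha_*$. The only point requiring a line of care here is to confirm that the solutions of Lemma~\ref{l1} are genuinely distinct, so that they give distinct measures; this follows from $w_1=1$ together with $w_2w_3=1$ and $w_2\neq w_3$ for $\alpha>\alpha_*$ (the two roots coalesce to $1$ only at $\alpha=\alpha_*$), as recorded in the discussion preceding the lemma. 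With distinctness in hand, the passage from Lemma~\ref{l1} to Theorem~\ref{tii} is purely a matter of unwinding the definition of $T_{\rm c}$.
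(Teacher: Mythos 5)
Your proposal is correct and matches the paper's own (very terse) argument: the paper derives Theorem \ref{tii} exactly as you do, namely as an immediate corollary of Lemma \ref{l1} combined with the correspondence from Theorem \ref{ei} and the reparametrization $\alpha=\exp(1/T)$, $T_{\rm c}=1/\ln\alpha_*$. Your added remark on the distinctness of the solutions (hence of the measures) is a sensible extra line of care that the paper leaves implicit.
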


 \begin{rk} Analysis of system (\ref{f}) for the case $k\geq 3$ seems difficult. The case $k=2$ is already
 interesting enough to see biological interpretations of Theorem \ref{tii}.
 \end{rk}

 \section{Biological interpretations: Markov chains, Holliday junction and branches of DNA}

For marginals on the two-edge sets which consist of two neighbor edges $l,t$,
considering a boundary law $\{(z_{0,l}, z_{1,l}, z_t), l\notin {\mathbb Z}- {\rm path}, \, t\in {\mathbb Z}\}$ (i.e. the solutions of system (\ref{***})) we have\footnote{This boundary law is normalized at $-1$, i.e., $z_{-1,l}=1$, that is $h_{-1,l}=0$.}
$$\mu(\s(l)=a,\s(t)=b)= \frac{1}{Z} z_{a,l} \exp(\beta ab) z_{b,t}, \ \ a,b=-1,0,1,$$
where $Z$ is normalizing factor.

From this,  the relation between the boundary law and the transition matrix
for the associated tree-indexed Markov chain (Gibbs measure) is immediately obtained
from the formula of the conditional probability. Indeed, once we have ${h}_{a,l}$ (given in formula (\ref{*}))
which is independent on $l$
we define tree-edge-indexed (non-homogenous) Markov chain with states
 $\{-1,0,1\}$  with transition matrix $\mathbb P^{[l,t]}=\left(P^{[l,t]}_{ij}\right)$, depending on $\langle l,t\rangle$  with
 $$P^{[l,t]}_{ij}=\left\{\begin{array}{llll}
 {\exp\left(\beta ij+h_j\right)\over \exp\left(-\beta i+h_{-1}\right)+\exp\left(h_{0}\right)+\exp\left(\beta i+h_{1}\right)}, \ \ \mbox{if} \ \ l,t\notin {\mathbb Z}-{\rm path}, i,j=-1,0,1\\[3mm]
 {\exp\left(\beta ij+h_j\right)\over \exp\left(-\beta i+h_{-1}\right)+\exp\left(\beta i+h_{1}\right)}, \ \ \mbox{if} \ \ l\notin {\mathbb Z}-{\rm path}, \, t\in {\mathbb Z}-{\rm path}, i=-1,0,1; j=-1,1\\[3mm]
 {\exp\left(\beta ij+h_j\right)\over \exp\left(-\beta i+h_{-1}\right)+\exp\left(h_{0}\right)+\exp\left(\beta i+h_{1}\right)}, \ \ \mbox{if} \ \ l \in {\mathbb Z}-{\rm path}, t\notin {\mathbb Z}-{\rm path}, i=-1,1; j=-1,0,1\\[3mm]
 {\exp\left(\beta ij+h_j\right)\over \exp\left(-\beta i+h_{-1}\right)+\exp\left(\beta i+h_{1}\right)}, \ \ \mbox{if} \ \ l,t\in {\mathbb Z}-{\rm path}, i,j=-1,1.
 \end{array}\right.
 $$
 Here $P^{[l,t]}_{ij}$ is the probability to go from a state $i$
at edge $l$ to a state
$j$ at the neighbor edge  $t$.

 Using formulas (\ref{alp}) and (\ref{zzz}) for solutions $(u,v,w)$ to (\ref{f}) we write the matrices $\mathbb P^{[l,t]}=\left(P^{[l,t]}_{ij}\right)$:
 $$\mathbb P^{[l,t]}=\mathbb P_{(3\to 3)}^{[l,t]}=\left(\begin{array}{ccc}
 {\alpha\over \alpha+u+\alpha^{-1}v}& {u\over \alpha+u+\alpha^{-1}v}& {\alpha^{-1} v\over \alpha+u+\alpha^{-1}v}\\[2mm]
 {1\over 1+u+v}& {u\over 1+u+v}& {v\over 1+u+v}\\[2mm]
  {\alpha^{-1}\over \alpha^{-1}+u+\alpha v}& {u\over \alpha^{-1}+u+\alpha v}& {\alpha v\over \alpha^{-1}+u+\alpha v}
\end{array}\right), \ \ \mbox{if} \ \  l,t\notin {\mathbb Z}-{\rm path}.
$$
$$\mathbb P^{[l,t]}=\mathbb P_{(3\to 2)}^{[l,t]}=\left(\begin{array}{ccc}
 {\alpha\over \alpha+\alpha^{-1}w}& 0 & {\alpha^{-1} w\over \alpha+\alpha^{-1}w}\\[2mm]
 {1\over 1+w}& 0& {w\over 1+w}\\[2mm]
  {\alpha^{-1}\over \alpha^{-1}+\alpha w}& 0& {\alpha w\over \alpha^{-1}+\alpha w}
\end{array}\right), \ \ \mbox{if} \ \  l\notin {\mathbb Z}-{\rm path}, t\in {\mathbb Z}-{\rm path}.
$$
 $$\mathbb P^{[l,t]}=\mathbb P_{(2\to 3)}^{[l,t]}=\left(\begin{array}{ccc}
 {\alpha\over \alpha+u+\alpha^{-1}v}& {u\over \alpha+u+\alpha^{-1}v}& {\alpha^{-1} v\over \alpha+u+\alpha^{-1}v}\\[2mm]
 *& *& *\\[2mm]
  {\alpha^{-1}\over \alpha^{-1}+u+\alpha v}& {u\over \alpha^{-1}+u+\alpha v}& {\alpha v\over \alpha^{-1}+u+\alpha v}
\end{array}\right), \ \ \mbox{if} \ \  l\in {\mathbb Z}-{\rm path}, \, t\notin {\mathbb Z}-{\rm path},
$$
where $*$ means that $P_{0j}^{[l,t]}$ is not defined, because $\sigma(l)\ne 0$ for any $l\in \mathbb Z-$path.
$$\mathbb P^{[l,t]}=\mathbb P_{(2\to 2)}^{[l,t]}=\left(\begin{array}{cc}
 {\alpha\over \alpha+\alpha^{-1}w}& {\alpha^{-1} w\over \alpha+\alpha^{-1}w}\\[2mm]
 {\alpha^{-1}\over \alpha^{-1}+\alpha w}&  {\alpha w\over \alpha^{-1}+\alpha w}
\end{array}\right), \ \ \mbox{if} \ \  l\in {\mathbb Z}-{\rm path}, t\in {\mathbb Z}-{\rm path}.
$$
\begin{rk} Matrices $\mathbb P_{(2\to 3)}^{[l,t]}$, $\mathbb P_{(3\to 2)}^{[l,t]}$ do not define a
standard Markov chain.  So we will not consider them. We are interested to see thermodynamics of Holliday junctions and branches of DNAs, for this reason it will be sufficient to study Markov chains generated by
$\mathbb P_{(3\to 3)}^{[l,t]}$ (which gives a Markov chain on the subtree consisting edges which are not on a $\mathbb Z$-path)\footnote{The case $k=2$ is special: in this case  $\mathbb P_{(3\to 3)}^{[l,t]}$ is defined only for $l=t\notin \mathbb Z$-path.}
  and $\mathbb P_{(2\to 2)}^{[l,t]}$ (which gives a Markov chain on the $\mathbb Z$-paths).
 \end{rk}
We note that each matrix $\mathbb P_{(n\to m)}^{[l,t]}$, $n,m=2,3$ is homogenous on the corresponding set of neighbor edges $\langle l,t\rangle$ where it is given, i.e., $\mathbb P_{(n\to m)}^{[l,t]}$ does not depend on $\langle l,t\rangle$ itself but only depends on its relation with ${\mathbb Z}-$path.

It is easy to find the following stationary distributions
$$\pi_{(n\to m)}=(\pi_{(n\to m),-1}, \pi_{(n\to m),0}, \pi_{(n\to m),1})$$ of the matrix $\mathbb P_{(n\to m)}^{[l,t]}$, $n=m$.

$$\pi_{(3\to3)}={1\over N}\left(\begin{array}{ccc}
(\alpha^2+\alpha u+v)(u+\alpha v+\alpha)\alpha\\
(1+u+v)(\alpha^2+\alpha u+v)\alpha u\\
(\alpha u+\alpha^2 v+1)(1+\alpha u+v)v
\end{array}\right)^t,$$
 where $N$ the normalizing factor.

$$\pi_{(2\to 2)}=\left({1+\alpha^{-2}w\over w^2+2\alpha^{-2}w+1}, {w(w+\alpha^{-2})\over w^2+2\alpha^{-2}w+1}\right).$$

The following is known as ergodic theorem for positive
stochastic matrices.
\begin{thm}(\cite{Ge}, p.55) Let $\mathbb P$ be a positive stochastic matrix and $\pi$
the unique probability vector with $\pi \mathbb P=\pi$ (i.e.  $\pi$ is stationary distribution). Then
$$\lim_{n\to \infty} x\mathbb P^n =\pi$$
for all initial vector $x$.
\end{thm}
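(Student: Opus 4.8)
The plan is to prove the statement by exhibiting the map $x\mapsto x\mathbb P$ as a strict contraction on the space of probability vectors, so that existence and uniqueness of $\pi$ together with the convergence all follow from the Banach fixed point theorem. Write $r$ for the number of states, let $\Delta=\{x\in\mathbb R^r: x_i\ge 0,\ \sum_i x_i=1\}$ be the probability simplex, and equip it with the $\ell^1$ metric $\|x-y\|_1=\sum_i|x_i-y_i|$; since $\Delta$ is a closed bounded subset of $\mathbb R^r$ it is a complete metric space. Because $\mathbb P$ is stochastic, $x\mapsto x\mathbb P$ maps $\Delta$ into $\Delta$, so the whole argument takes place inside this complete space.

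The key step is an elementary decomposition of a \emph{positive} stochastic matrix. Put $\delta=\min_{i,j}P_{ij}>0$; since every row sums to $1$ one has $r\delta\le 1$. If $r\delta=1$ then necessarily $P_{ij}=1/r$ for all $i,j$, whence $x\mathbb P=(1/r,\dots,1/r)=\pi$ for every $x$ and the claim is immediate. Otherwise $0<r\delta<1$, and I would write
$$\mathbb P=r\delta\,Q+(1-r\delta)\,R,\qquad Q_{ij}=\tfrac1r,\quad R_{ij}=\frac{P_{ij}-\delta}{1-r\delta},$$
noting that $R$ is again stochastic (its entries are nonnegative and each row sums to $1$). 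The point of the uniform block $Q$ is that it forgets its input: for any probability vector $x$ one has $xQ=u$ with $u=(1/r,\dots,1/r)$, independent of $x$. Hence for $x,y\in\Delta$,
$$x\mathbb P-y\mathbb P=(1-r\delta)(xR-yR),$$
and using that a stochastic matrix is non-expansive in $\ell^1$, i.e. $\|zR\|_1\le\|z\|_1$ for every vector $z$, I obtain the contraction estimate
$$\|x\mathbb P-y\mathbb P\|_1\le(1-r\delta)\,\|x-y\|_1,$$
with contraction constant $c=1-r\delta\in[0,1)$.

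With the contraction in hand, the Banach fixed point theorem supplies a unique $\pi\in\Delta$ with $\pi\mathbb P=\pi$, which is exactly the unique stationary probability vector, and it gives $x\mathbb P^n\to\pi$ for every $x\in\Delta$. To reach an arbitrary initial vector as in the statement I would argue through the iterates of $\mathbb P$ itself: applying the convergence to each standard basis vector $e_i$ (a probability vector) shows that the $i$-th row $e_i\mathbb P^n$ of $\mathbb P^n$ tends to $\pi$, so $\mathbb P^n\to\Pi$, where $\Pi$ is the matrix all of whose rows equal $\pi$. Consequently, for any vector $x$ with $\sum_i x_i=1$,
$$x\mathbb P^n\longrightarrow x\Pi=\Big(\sum_i x_i\Big)\pi=\pi,$$
which is the assertion (for a genuine initial distribution $\sum_i x_i=1$, so the limit is exactly $\pi$).

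The only delicate point is the contraction estimate, namely the passage through the decomposition $\mathbb P=r\delta Q+(1-r\delta)R$ and the non-expansiveness $\|zR\|_1\le\|z\|_1$; once positivity of $\mathbb P$ is used to guarantee $\delta>0$ and hence $c<1$, strictness of the contraction is automatic and the remaining steps are formal consequences of the fixed point theorem. An alternative route, which I would keep in reserve, is the Perron--Frobenius theorem: a positive stochastic matrix has $1$ as a simple eigenvalue dominating all others in modulus, the corresponding right and left eigenvectors being $\mathbf 1$ and $\pi$, from which $\mathbb P^n\to\mathbf 1\,\pi$ follows by the spectral decomposition. The contraction argument above is preferable here because it is self-contained and gives a quantitative rate $c^n$.
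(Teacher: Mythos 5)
Your proposal is correct in every step I checked: the splitting $\mathbb P=r\delta Q+(1-r\delta)R$ is valid ($R$ is stochastic precisely because $\delta=\min_{i,j}P_{ij}$ and rows of $\mathbb P$ sum to one), the uniform block indeed satisfies $xQ=(1/r,\dots,1/r)$ for every $x\in\Delta$, the $\ell^1$ non-expansiveness $\|zR\|_1\le\|z\|_1$ follows correctly from the triangle inequality and the row sums, and the Banach fixed point theorem on the complete space $(\Delta,\|\cdot\|_1)$ then yields both the unique stationary $\pi$ and the convergence $x\mathbb P^n\to\pi$, with the extension to general mass-one vectors via $\mathbb P^n\to\Pi$ handled cleanly; you also rightly flag that the theorem's phrase ``for all initial vector $x$'' can only mean vectors with $\sum_i x_i=1$, since otherwise the limit is $\left(\sum_i x_i\right)\pi$, and you treat the degenerate case $r\delta=1$ explicitly. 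There is, however, nothing in the paper to compare this against: the statement is quoted purely as a citation to Georgii (\cite{Ge}, p.~55) with no proof supplied, and it is used only as a black box to justify convergence to the stationary distributions $\pi_{(3\to 3)}$ and $\pi_{(2\to 2)}$ in Section 5. Measured against the standard reference proofs (Perron--Frobenius spectral theory, or a contraction estimate via an ergodic/oscillation coefficient as in Georgii's book), your Doeblin-splitting argument is the elementary, self-contained route, and it buys something the qualitative statement does not even ask for: the explicit geometric rate $\|x\mathbb P^n-\pi\|_1\le(1-r\delta)^n\,\|x-\pi\|_1$.
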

As corollary of this theorem and above formulas of matrices and stationary distributions we obtain
the following
\begin{thm} In a stationary state\footnote{In the case of non-uniqueness of Gibbs measure
(and corresponding Markov chains) we have different stationary states for
different measures. These depend on the temperature and on the fixed measure.} of the set of DNAs, independently on
$l\notin \mathbb Z$-path, a Holliday junction through $l$ does not occur
with the following probability (with respect to measure $\mu_i$, $i=1,2,3$)
$$\pi_{(3\to3),0}=\pi^{(i)}_{(3\to3),0}={1\over N}
(1+u_i+v_i)(\alpha^2+\alpha u_i+v_i)\alpha u_i.
$$
(Consequently, a Holliday junction occurs with probability $1-\pi^{(i)}_{(3\to3),0}$)\footnote{One can see that $\pi^{(i)}_{(3\to3),0}$ is a function of $i$ and temperature only.}
where $(u_i,v_i)$ are defined in Lemma \ref{l1}.
\end{thm}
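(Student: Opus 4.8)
The plan is to identify the no-junction probability with the value-$0$ coordinate of the stationary distribution $\pi_{(3\to3)}$ of the positive stochastic matrix $\mathbb{P}_{(3\to3)}^{[l,t]}$, and then to specialize to $\mu_i$ by inserting the solution from Lemma \ref{l1}. First I would fix the dictionary supplied by the model: by the definition of the configuration $\sigma$, the edge $l$ belongs to no DNA — equivalently, no Holliday junction passes through $l$ — precisely when $\sigma(l)=0$. Thus the event ``a junction through $l$ does not occur'' is exactly $\{\sigma(l)=0\}$, and the quantity to compute is $\mu_i(\sigma(l)=0)$ for an edge $l\notin\mathbb{Z}$-path.

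Next I would invoke the Markov structure. Restricted to the edges off the $\mathbb{Z}$-paths, the measure $\mu_i$ is a tree-indexed Markov chain whose one-step kernel is the homogeneous matrix $\mathbb{P}_{(3\to3)}^{[l,t]}$ displayed above; its entries involve only $\alpha$ and the translation-invariant pair $(u_i,v_i)$ and not the particular neighbor pair $\langle l,t\rangle$. Since $\mathbb{P}_{(3\to3)}^{[l,t]}$ is a positive stochastic matrix, the ergodic theorem for positive stochastic matrices stated above gives a unique stationary vector $\pi_{(3\to3)}$ together with the convergence $x\,\mathbb{P}_{(3\to3)}^{n}\to\pi_{(3\to3)}$ for every initial law $x$. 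Consequently the equilibrium single-edge law of $\sigma(l)$ in the stationary state is $\pi_{(3\to3)}$, independently of which $l\notin\mathbb{Z}$-path is chosen; reading its middle (value-$0$) coordinate off the explicit stationary vector recorded above yields
$$\mu_i(\sigma(l)=0)=\pi_{(3\to3),0}=\frac{1}{N}\,(1+u+v)(\alpha^2+\alpha u+v)\,\alpha u,$$
so that a junction occurs with the complementary probability $1-\pi_{(3\to3),0}$.

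Finally I would substitute, for each $i=1,2,3$, the coordinates $(u_i,v_i)$ of ${\bf z_i}$ from Lemma \ref{l1} into this formula to obtain $\pi^{(i)}_{(3\to3),0}$, with $N$ the sum of the three stationary coordinates evaluated at $(u_i,v_i)$; since those coordinates depend only on $\alpha$ (that is, on the temperature) and on $i$, so does the answer. The one genuine calculation — which I view as the principal technical point rather than a deep obstacle — is checking that the displayed $\pi_{(3\to3)}$ really satisfies $\pi_{(3\to3)}\mathbb{P}_{(3\to3)}^{[l,t]}=\pi_{(3\to3)}$; this is a direct verification (equivalently, the three unnormalized coordinates are the diagonal cofactors of $I-\mathbb{P}_{(3\to3)}^{[l,t]}$), and it was already asserted when the stationary distribution was computed. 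The step deserving the most care is the passage from the full Gibbs measure to the one-edge marginal in the special case $k=2$: there each vertex carries a single off-$\mathbb{Z}$-path edge, so distinct such edges never share a vertex and $\mathbb{P}_{(3\to3)}^{[l,t]}$ enters only through its stationary law. One must therefore confirm that this equilibrium single-edge law is indeed $\pi_{(3\to3)}$, and not a marginal reconstructed from a single neighboring $\mathbb{Z}$-path edge; this identification, supplied precisely by the ergodic theorem, is the point I would check most carefully.
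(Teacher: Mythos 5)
Your overall route is the same as the paper's: the paper obtains this theorem in one line as a corollary of the ergodic theorem together with the displayed matrix $\mathbb P_{(3\to 3)}^{[l,t]}$ and its displayed stationary vector, and you follow exactly that path, adding welcome care about identifying the no-junction event with $\{\sigma(l)=0\}$ and about the degenerate $k=2$ geometry. The difficulty is that the step you yourself single out as ``the principal technical point'' --- checking that the displayed vector satisfies $\pi\,\mathbb P_{(3\to3)}^{[l,t]}=\pi$ --- is precisely the step you do not carry out, and it fails. Writing $z_{-1}=1$, $z_0=u$, $z_1=v$ and $d_i=\sum_j z_j\alpha^{ij}$ for the row denominators, the matrix has the reversible form $P_{ij}=z_j\alpha^{ij}/d_i$, so its unique stationary vector is
\begin{equation*}
\pi_i\propto z_i d_i,\qquad\text{i.e.}\qquad \pi_{(3\to3)}\propto\bigl(\alpha+u+\alpha^{-1}v,\ u(1+u+v),\ v(\alpha^{-1}+u+\alpha v)\bigr),
\end{equation*}
since $(\pi P)_j=\sum_i z_id_i\,z_j\alpha^{ij}/d_i=z_j\sum_i z_i\alpha^{ij}=z_jd_j$. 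This is not proportional to the triple of products quoted in the paper: for $v=1$ the chain is invariant under $i\mapsto-i$, so $\pi_{-1}=\pi_1$, whereas the quoted first and third coordinates become $(\alpha^2+\alpha u+1)(u+2\alpha)\alpha$ and $(\alpha^2+\alpha u+1)(2+\alpha u)$, which are unequal for $\alpha>1$; and for $\alpha=2$, $u=1$, $v=2$ the true stationary vector is $(4,4,11)/19$ while the quoted one normalizes to $(112,64,110)/286$, so even the middle coordinate disagrees ($4/19\neq 64/286$). It happens that for $v=1$ the two normalized middle coordinates coincide, so the stated value of $\pi_{(3\to3),0}$ is correct for $\mu_1$ (where $v_1=1$); but for $\mu_2,\mu_3$ one has $v_i\neq1$ once $\alpha>\alpha_*$, and then the claimed expression for $\pi^{(i)}_{(3\to3),0}$ does not follow from your argument and is not the stationary probability of state $0$. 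Your parenthetical that the unnormalized coordinates are the diagonal cofactors of $I-\mathbb P_{(3\to3)}^{[l,t]}$ is the right general recipe, but the quoted products are not those cofactors, so the deferral ``it was already asserted when the stationary distribution was computed'' is exactly where the proof breaks.

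A secondary gap is one you raise and then wave away. For $k=2$ every vertex is incident to exactly one off-$\mathbb Z$-path edge, so no two such edges are adjacent and $\mathbb P_{(3\to3)}^{[l,t]}$ never governs a transition between two distinct edges (the paper's own footnote concedes it is defined only for $l=t$). Hence the ergodic theorem cannot be the mechanism identifying the single-edge marginal of $\mu_i$ at $l\notin\mathbb Z$-path with $\pi_{(3\to3)}$: there is no chain of $(3\to3)$ steps to run to equilibrium. That marginal must instead be computed directly from the boundary law, by summing the two-edge marginal $\frac{1}{Z}z_{a,l}e^{\beta ab}z_{b,t}$ over the neighbours of $l$, all of which lie on $\mathbb Z$-paths and carry the two-state law $(1,w)$; there is no a priori reason for the result to coincide with the stationary vector of $\mathbb P_{(3\to3)}^{[l,t]}$. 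Both the identification step and the explicit formula therefore need to be reworked before the argument is complete.
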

\begin{rk} Since each DNA has a countable set of neighbor DNAs, at the same temperature, it my have Holliday junctions
with several of its neighbors. Such connected DNAs can be considered as a
branched DNA. In case of coexistence of more than one Gibbs measures, branches of a DNA can consist
different phases and different stationary states.
\end{rk}

 Now we are interested to calculate
 the limit of  stationary distribution vectors $\pi^{(i)}_{(3\to3)}$, $\pi^{(i)}_{(2\to 2)}$
 (which correspond to the Markov chain generated by the Gibbs measure $\mu_i$)
 in case when temperature $T\to 0$ (i.e. $\beta\to \infty$ and $\alpha\to \infty$) and
when temperature $T\to +\infty$ (i.e. $\beta\to 0$ and $\alpha\to 1$).
To calculate the limit observe that values  $u_{i}, v_i, w_i$, $i=1,2,3$ vary
with $T=1/\beta$.

\begin{lemma}\label{l2} The following equalities hold
\begin{itemize}
\item[-] The case of low temperature:
$$\lim_{T\to 0}\pi_{(3\to 3)}^{(i)}=(1,0,0), \ \ i=1,2;\ \  \lim_{T\to 0}\pi_{(3\to 3)}^{(3)}=(0,0,1).$$
$$\lim_{T\to 0}\pi_{(2\to 2)}^{(1)}=({1\over 2},{1\over 2}), \ \ \lim_{T\to 0}\pi_{(2\to 2)}^{(2)}=(1,0), \ \
\lim_{T\to 0}\pi_{(2\to 2)}^{(3)}=(0,1).$$
\item[-] The case of high temperature\footnote{Recall that measures $\mu_2$ and $\mu_3$ do not exist for $T>T_{\rm c}$}:
$$\lim_{T\to +\infty}\pi_{(3\to 3)}^{(1)}=({1\over 3},{1\over 3},{1\over 3}).$$
$$\lim_{T\to T_{\rm c}}\pi_{(3\to 3)}^{(i)}\approx (0.54246, 0.23555, 0.22199), \, i=1,2,3.$$
$$\lim_{T\to +\infty}\pi_{(2\to 2)}^{(1)}=\lim_{T\to T_{\rm c}}\pi_{(2\to 2)}^{(i)}=({1\over 2},{1\over 2}), i=2,3.$$

\end{itemize}
\end{lemma}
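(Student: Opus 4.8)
The plan is to treat the statement as a direct asymptotic analysis: both stationary vectors $\pi_{(3\to3)}$ and $\pi_{(2\to2)}$ are given by explicit rational expressions in $(u,v,w)$ and $\alpha$, so it suffices to insert the asymptotics of the solutions $(u_i,v_i,w_i)$ from Lemma~\ref{l1} and then read off which coordinate of the unnormalized vector dominates after dividing by the normalizing factor. First I would record the behavior of the three solution branches in each regime. For $i=1$ one has $(u_1,v_1,w_1)=(\cosh^{-2}(\beta),1,1)$ exactly, so $u_1\to0$ as $\beta\to\infty$ (low $T$, $\alpha\to\infty$) and $u_1\to1$ as $\beta\to0$ (high $T$, $\alpha\to1$), while $v_1=w_1=1$ throughout.

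The genuinely new input is the low-temperature asymptotics of $w_2,w_3$. Rather than expand the two surds separately, I would exploit the relation $w_2w_3=1$ together with the $+$ branch, where no cancellation occurs: since the radicand is $\sim\alpha^8$, one has $(\alpha^2-1)\sqrt{\,\cdot\,}\sim\alpha^6$, so the numerator of $w_3$ is $\sim2\alpha^6$ against a denominator $2(\alpha^2+\alpha+1)\sim2\alpha^2$, whence $w_3\sim\alpha^4\to\infty$ and therefore $w_2=1/w_3\to0$ (in fact $\sim\alpha^{-4}$). Feeding these into $u_i=\big((1+w_i)/(\alpha+\alpha^{-1}w_i)\big)^2$ and $v_i=\big((1+\alpha^2w_i)/(\alpha^2+w_i)\big)^2$ gives $u_2\sim\alpha^{-2}$, $v_2\sim\alpha^{-4}$ and $u_3\sim\alpha^{2}$, $v_3\sim\alpha^{4}$ to leading order.

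With these orders in hand the low-temperature limits follow by power counting. Inserting the asymptotics into $\pi_{(3\to3)}$ and comparing the three coordinates as powers of $\alpha$, for $i=1,2$ the first coordinate dominates, giving $(1,0,0)$, while for $i=3$ the third coordinate wins, giving $(0,0,1)$; the same substitution in $\pi_{(2\to2)}=\big((1+\alpha^{-2}w)/(w^2+2\alpha^{-2}w+1),\,w(w+\alpha^{-2})/(w^2+2\alpha^{-2}w+1)\big)$ yields $(1/2,1/2)$, $(1,0)$, $(0,1)$ for $i=1,2,3$. For the high-temperature side I would argue separately for $T\to+\infty$ and $T\to T_{\rm c}$. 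As $T\to+\infty$ only $\mu_1$ survives and $(u_1,v_1,w_1)\to(1,1,1)$ with $\alpha\to1$; substituting $u=v=w=1$, $\alpha=1$ makes all three coordinates of $\pi_{(3\to3)}$ equal and the two coordinates of $\pi_{(2\to2)}$ equal, producing $(1/3,1/3,1/3)$ and $(1/2,1/2)$. As $T\to T_{\rm c}$, i.e.\ $\alpha\to\alpha_*$, the three branches coalesce because $w_2(\alpha_*)=w_3(\alpha_*)=1$, so all $(u_i,v_i,w_i)$ tend to $(\cosh^{-2}(\ln\alpha_*),1,1)$; substituting $w=1$ into $\pi_{(2\to2)}$ gives exactly $(1/2,1/2)$ for every $\alpha$, while evaluating $\pi_{(3\to3)}$ at $\alpha=\alpha_*$, $u=\cosh^{-2}(\ln\alpha_*)$, $v=1$ produces the stated numerical vector $\approx(0.54246,0.23555,0.22199)$.

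The main obstacle is the $\alpha\to\infty$ analysis of the surd in $w_2$: the leading terms of its numerator cancel to several orders, so a careless expansion loses the true order of $w_2$. The clean route around this is precisely the reciprocity $w_2w_3=1$, which transfers the delicate limit to the harmless $+$ branch; after that every remaining step is bounded power counting in $\alpha$ and direct substitution, the only care being to keep enough terms so that no coordinate's leading power is underestimated.
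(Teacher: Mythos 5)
Your proposal is correct and follows essentially the same route as the paper, whose proof is simply the remark that the limits follow by direct calculation from the explicit formulas for $(u_i,v_i,w_i)$; your power-counting in $\alpha$ and the use of $w_2w_3=1$ to sidestep the cancellation in the $-$ branch of the surd are just a careful execution of that computation.
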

\begin{proof} Since we know explicit formulas for $u_{i}, v_i, w_i$, $i=1,2,3$
the proof consists simple calculations of limits.
\end{proof}

{\bf Structure of DNAs in low and high temperatures.} By Lemma \ref{l2} we have the following structures of the set of DNAs:
\begin{itemize}
\item[(i)] In case $T\to 0$ the set of DNAs have the following stationary states (configurations):
\begin{itemize}
\item[Case $\mu_1$:] All neighboring DNAs connected to each other (Holliday junctions) with state $\sigma(l)=-1$ for any $l\notin \mathbb Z$-path. The sequence of $\pm 1$s, in a DNA on the $\mathbb Z$-path, is free, i.e. can be any sequence, with iid and equiprobable ($=1/2$), of $-1$ and $+1$.
\item[Case $\mu_2$:] All neighboring DNAs connected to each other (Holliday junctions) with state $\sigma(l)=-1$ for any $l\notin \mathbb Z$-path. The sequence of $\pm 1$s, in a DNA on the $\mathbb Z$-path, is rigid, i.e $\sigma(l)=-1$ for all $l\in \mathbb Z$-path. Thus the system contains only one multiple (countable) branched DNA (which has tree structure)\footnote{Recall that in our construction $-1=(A \, with\, T)$, moreover we assumed $(A\, with\, T)=(T\, with\, A)$. Similarly, $+1=(C\, with\, G)=(G \,with\, C)$.}
\item[Case $\mu_3$:] All neighboring DNAs connected to each other with state $\sigma(l)=+1$ for any $l\notin \mathbb Z$-path. The DNA on the $\mathbb Z$-path, is rigid, i.e $\sigma(l)=+1$ for all $l\in \mathbb Z$-path. Thus this is similar to case of $\mu_2$ but all $-1$s replaced by $+1$s.
\end{itemize}
\item[(ii)] In case $T=T_{\rm c}$ the set of DNAs have the following stationary states:
 each neighboring DNAs have Holliday junction with probability $0.76445$ (more precisely, a junction through state $-1$ with probability 0.54246 and a junction through state $+1$ with probability 0.22199) and no junction with probability $0.23555$.   The sequence of $\pm 1$s, in a DNA on the $\mathbb Z$-path, is free, with iid and equiprobable ($=1/2$), of $-1$ and $+1$s.

 \item[(iii)] In case $T\to +\infty$ the set of DNAs have the following stationary states:
 each neighboring DNAs have Holliday junction with probability $2/3$ (more precisely, a junction through state $-1$ or $+1$ with equiprobable 1/3) and no junction with probability $1/3$.  The sequence of $\pm 1$s, in a DNA on the $\mathbb Z$-path, is free, with iid and equiprobable ($=1/2$), of $-1$ and $+1$s.
\end{itemize}

\section*{ Acknowledgements}

The author thanks prof. L.Bogachev and Department of Statistics, School of Mathematics, University of Leeds, UK;
prof. Y.Velenik and the Section of Mathematics, University Geneva, Switzerland;  prof. M. Ladra and the Department of Algebra, 
University of Santiago de Compostela, Spain for financial support and kind hospitality during my visits to these universities.

\end{document}